\newtheorem{thm}{Theorem}
\newtheorem{lem}{Lemma}
\begin{document}

\title{Optimal Auction Design with Quantized Bids}

\author{\IEEEauthorblockN{Nianxia Cao, Swastik Brahma, Pramod K. Varshney\\}
\IEEEauthorblockA{Department of Electrical Engineering and Computer Science, Syracuse University, NY, 13244, USA\\
\{ncao, skbrahma, varshney\}@syr.edu}}

\maketitle

\begin{abstract}
\boldmath This letter considers the design of an auction mechanism
to sell the object of a seller when the buyers quantize their private value estimates regarding the object prior to communicating them to the seller. The designed auction mechanism maximizes the utility of the seller (i.e., the auction is optimal), prevents buyers from communicating falsified quantized bids (i.e., the auction is incentive-compatible), and ensures that buyers will participate in the auction (i.e., the auction is individually-rational). The letter also investigates the design of the optimal quantization thresholds 
using which buyers quantize their private value estimates. 
Numerical results provide insights regarding the influence of the quantization thresholds on the auction mechanism.
\end{abstract}

\begin{IEEEkeywords}
Mechanism design, auctions, binary bids, resource constrained network, quantization.
\end{IEEEkeywords}

\section{Introduction}
\label{sec:intro}

The field of mechanism design (also known as reverse game theory) aims to study how to 
implement desired objectives (social or individual) in systems comprised of multiple selfish and rational agents, 
with agents having private information 
that influence the solution \cite{algoMechDesign}. 
Auction design \cite{vijayKrishna}, which falls in the category of mechanism design problems, 
seeks to investigate how to allocate an object (such as, a resource) to a set of buyers, 
with buyers having private value estimates about the object, 
and to determine the price at which to trade the object via competition among the buyers. 
In general, auction design has been a well studied topic in the past. 
A good overview of the topic is provided in \cite{vijayKrishna}.

However, in traditional auction design (such as, in \cite{vijayKrishna}), it has been assumed that the 
buyers send their private information, 
typically considered as analog values, to the seller in analog form.
In contrast, in this letter, we consider the design of an auction mechanism where
the buyers quantize their private information, i.e., their private analog values, 
prior to communication.
Quantization of analog private information prior to communication is practical,
for example, when the buyers and the seller communicate in a resource constrained environment (such as, with limited bandwidth and energy).
Some example scenarios of such 
environments 
include auction based resource allocation for 
sensor management \cite{ncao_Globalsip13,ncao_tsp15,Pricetheorey_Chavali}, 
spectrum allocation in Cognitive Radio systems \cite{kasbekar2010spectrum,kash2014enabling,
nadendla2012auction,huang2014truthful}, 
and routing games in networks \cite{su2009auction,hershberger2001vickrey,
shu2010truthful}. 
It should be noted that, design of an auction mechanism
with quantized bids is not only complicated by the fact that
the seller is unaware of the true value estimates of the bidders,
but also by the fact that the seller only gets quantized data
from the buyers that convey information about their private
value estimates. The buyers (being selfish and rational entities) may
intentionally falsify the quantized bids they transmit in order
to acquire an additional advantage, which further complicated the problem. 
Moreover, as can be expected, choice of the quantization thresholds influences 
the outcome of the auction, so that the design of the optimal quantization thresholds becomes important. 

In this letter, we design an optimal auction mechanism where the buyers quantize 
their analog private value estimates regarding the traded object into binary values prior to communication.
To the best of our knowledge, this is the first work till date to investigate this problem.
Our auction mechanism is comprised of three components-
\begin{inparaenum}[\itshape a\upshape)]
	\item \textit{winner determination function}, which determines the bidder who wins the object,
	\item \textit{payment function}, which determines the payment to be made by each bidder, and,
	\item \textit{quantization thresholds}, which determine how the buyers will quantize their private analog value estimates.
\end{inparaenum}
This letter 
designs the aforementioned components of the auction so that the auction
is optimal (i.e., maximizes the seller's utility),
individually-rational (i.e., rationalizes buyer participation), and incentive-compatible
(i.e., prevents buyers from communicating falsified binary bids).
We also study the influence of the quantization thresholds on the optimal mechanism via simulations.

The rest of this letter is organized as follows. 
In Section II, we decribe the auction model considered. Section III formulates the optimal auction design problem. Section IV analyzes the problem for given quantization thresholds. 
Section V investigates the design of the optimal quantization thresholds.
Finally, Section VI concludes this letter.


\section{Auction Model}
The seller, as an auctioneer in our model, has an object to sell to one of $N$ potential buyers. The buyers, on the other hand, compete to buy the object from the seller, and comprise the set of bidders. Each buyer has a private (analog) value estimate regarding the object, which is unknown to the seller.  
The auction with quantized bids is conducted using the following steps: 
\begin{inparaenum}[\itshape a\upshape)]
\item The seller designs an optimal auction mechanism and the corresponding optimal quantization thresholds; 
\item According to the rules set by the seller, the buyers transmit their quantized bids to the seller; 
\item The seller decides the winner of the auction and how much to charge for the object.
\end{inparaenum}

In this letter, we consider the case where the buyers  quantize their private value estimates and transmit binary bids to the seller to compete for the object. For each buyer $i$, there is some private (analog) value estimate $v_i$ for the object, and the corresponding quantized value estimate is denoted as $\omega_i$. 
We assume that the 
value estimate of buyer $i$ can be described by a probability density function $\tilde{f}_i: [a_i,b_i] \rightarrow \mathbf{R}_+$, where $a_i$ is buyer $i$'s lowest value estimate for the object, and $b_i$ is his highest value estimate, and $-\infty \leq a_i \leq b_i \leq \infty$. The binary value estimate of buyer $i$ is defined as:
\begin{equation}\label{eq:binary_t}
\omega_i = \left\{ \begin{array}{rl}
  0 &\mbox{ $a_i \leq v_i \leq \eta_i$} \\
  1 &\mbox{ $\eta_i < v_i \leq b_i$ }  \\
       \end{array} \right.
\end{equation}
where, $\eta_i$ is the quantization threshold of buyer $i$. The seller's uncertainty about the binary value estimate of buyer $i$ can be described by the probability mass function (pmf) of $\omega_i$
\begin{equation}
f_i(0) = \Pr(\omega_i=0) = F(a_i\leq v_i\leq \eta_i)=\int_{a_i}^{\eta_i} \tilde{f}_i(v_i) d v_i.
\end{equation}
Let $\Omega$ denote the set of all possible combinations of buyers' binary value estimates $\Omega = \{0,1 \}^N$, i.e., the vector $\boldsymbol{\omega} \in \Omega$. Similarly, we let $\Omega_{-i}$ denote the set of all possible combinations of the value estimates of the buyers other than $i$, so that the vector $\boldsymbol{\omega}_{-i} = [\omega_1,\ldots,\omega_{i-1},\omega_{i+1},\ldots,\omega_N]^T \in \Omega_{-i}$ where $\Omega_{-i} = \{ 0,1\}^{N-1}$.

The binary value estimates of the buyers are assumed to be statistically independent with each other. Thus, the joint pmf of the vector $\boldsymbol{\omega}$ is $f(\boldsymbol{\omega}) = \prod_{j=1,\ldots,N} f_j(\omega_j)$. 
We assume that buyer $i$ treats the other buyers' binary value estimates in a similar way as the seller does. Thus, both the seller and buyer $i$ consider the joint pmf of the vector of value estimates for all the buyers other than $i$, $\boldsymbol{\omega}_{-i}$, to be $f_{-i}(\boldsymbol{\omega}_{-i}) = \prod_{j=1,\ldots,i-1,i+1,\ldots,N}f_j(\omega_j)$. 
The seller's personal value estimate for the object is denoted by $v_0$. 

\section{Auction Design Problem Formulation}
The auction design problem is to design the outcome functions $\mathbf{q}$, $\mathbf{p}$, and the quantization thresholds $\boldsymbol{\eta}$ that maximize the seller's expected utility subject to certain constraints, 
where $\mathbf{q},\mathbf{p}:\Omega\to \mathbb{R}^N$, $\mathbf{q} = [q_1, \ldots, q_N]^T$, $\mathbf{p} = [p_1,\ldots,p_N]^T$. Specifically, $q_i(\boldsymbol{\omega})$ is the probability of buyer $i$ being selected by the seller, and $p_i(\boldsymbol{\omega})$ is the amount that buyer $i$ has to pay\footnote{Notice that, in our formulation, we allow for the possibility that a buyer may have to pay something even if he is not selected as the winner, but as we will later show this will not be the case.}. 
Further, in this letter, we focus on the direct mechanism, where the buyers directly transmit their binary bids to the seller \cite{myerson1983efficient}.


By assuming throughout this letter that the seller and the buyers are risk neutral, we next define the utility functions of the seller and the buyers. The expected utility of the seller is, 
\begin{equation}\label{u_0}
\mathcal{U}_0 (\mathbf{p}, \mathbf{q}, \boldsymbol{\eta}) = \sum_{\boldsymbol{\omega}\in \Omega} \left[ v_{0} \left(1-\sum\limits_{i=1}^{N} q_i(\boldsymbol{\omega})\right) + \sum\limits_{i=1}^{N}  p_i(\boldsymbol{\omega})\right] f(\boldsymbol{\omega})
\end{equation}
Since buyer $i$ is aware of his actual value estimate $v_i\in [a_i,b_i]$, his expected utility with the binary bid $\omega_i\in\{0,1\}$ is described as
\begin{equation}\label{ui_t}
\begin{aligned}
&\mathcal{U}_i (p_i, q_i, v_i, \omega_i, \boldsymbol{\eta}_{-i}) = \sum_{\boldsymbol{\omega}_{-i}\in \Omega_{-i}} \Big[ v_i q_i(\boldsymbol{\omega}) - p_i(\boldsymbol{\omega}) \Big] f_{-i}(\boldsymbol{\omega}_{-i})
\end{aligned}
\end{equation}
where $\boldsymbol{\eta}_{-i} = [\eta_1,\ldots,\eta_{i-1},\eta_{i+1},\ldots,\eta_N]^T$. Consider now that buyer $i$'s actual value estimate $v_i$ was supposed to be quantized to $\omega_i$ according to \eqref{eq:binary_t}, but he instead transmits a binary value estimate $\tilde{\omega}_i$ ($\tilde{\omega}_i \in \{0,1\}$, $\tilde{\omega}_i$ needs not to be equal to $\omega_i$). Then, his expected utility would be
\begin{equation}\label{ui_f}
\tilde{\mathcal{U}}_{i} = \sum_{\boldsymbol{\omega}_{-i}\in \Omega_{-i}} \Big[ v_i q_i(\tilde{\omega}_i,\boldsymbol{\omega}_{-i})  - p_i(\tilde{\omega}_i,\boldsymbol{\omega}_{-i}) \Big] f_{-i}(\boldsymbol{\omega}_{-i})
\end{equation}
The optimal auction mechanism is designed to maximize the seller's expected utility while ensuring some conditions: 
\begin{subequations}\label{eq_opt}
\begin{align}
& \underset{\mathbf{p}, \mathbf{q}, \boldsymbol{\eta}}{\text{maximize}} & & \mathcal{U}_0 (\mathbf{p}, \mathbf{q}, \boldsymbol{\eta}) \nonumber\\
& \text{subject to} & & \mathcal{U}_i (p_i, q_i, v_i, \omega_i, \boldsymbol{\eta}_{-i}) \geq 0\label{eq:opt_a}\\
& & & \mathcal{U}_i  \geq \tilde{\mathcal{U}}_{i} \label{eq:opt_b}\\
& & & \sum_{i=1}^N q_i(\boldsymbol{\omega}) \leq 1, \quad \forall \boldsymbol{\omega}\in \Omega \label{eq:opt_d}\\
& & & 0 \leq q_{i}(\boldsymbol{\omega}) \leq 1, \quad i \in \left\{1, \ldots N \right\}\quad \forall \boldsymbol{\omega}\in \Omega\label{eq:opt_e}\\
& & & a_i \leq \eta_{i} \leq b_i, \quad i \in \left\{1, \ldots N \right\}\label{eq:opt_f}
\end{align}
\end{subequations}
where $v_i \in [a_i, b_i]$ and $\tilde{\omega}_i, \omega_i\in\{0,1\}$.  Below we describe each constraint in detail.
\begin{itemize}
\item \textit{Individual Rationality (IR) constraint}~\eqref{eq:opt_a}: We assume that the seller cannot force a buyer to participate in an auction. If the buyer does not participate in the auction, he would not get the object, but also would not pay the seller, so his utility would be zero. Thus, to make buyers participate in the auction, this condition must be satisfied.
\item \textit{Incentive-Compatibility (IC) constraint}~\eqref{eq:opt_b}: We assume that the seller can not prevent any buyer from lying about his binary value estimate if the buyer is expected to gain from lying.
Thus, to make sure that no buyer has any incentive to lie about his value estimate,
transmission of true binary value estimates must form a Bayesian Nash equilibrium of the game.
\item \textit{Probability constraints}~\eqref{eq:opt_d} and \eqref{eq:opt_e}: Since there is only one object, the seller can select at most one buyer to sell his object.
\item \textit{Threshold constraint}~\eqref{eq:opt_f}: The quantization thresholds are between the lowest and highest value estimates of each buyer.
\end{itemize}

\section{Optimal Auction Design for Given Quantization Thresholds}
\label{sec:auction}

In this section, we analyze the optimal mechanism design problem when the quantization thresholds $\boldsymbol{\eta}$ are given. We first state a lemma corresponding to the IC condition of \eqref{eq:opt_b}.
\begin{lem}\label{lem}
The IC condition holds if and only if the following conditions hold:
\begin{subequations}
\begin{align}
&1~~Q_i^1 - Q_i^0 \geq 0 \label{ic2} \\
&2~~P_i^1 = P_i^0 + \eta_i(Q_i^1 - Q_i^0).\label{ic1}
\end{align}
\end{subequations}
where 
$Q_i^l \triangleq \sum_{\boldsymbol{\omega}_{-i}} q_i(\omega_i=l,t_{-i}) f_{-i}(\boldsymbol{\omega}_{-i})$ is the expected probability that buyer $i$ will be selected when he transmits his binary value estimate $l$ conditioned on all other buyers' binary value estimates. Similarly, $P_i^l \triangleq \sum_{\boldsymbol{\omega}_{-i}} p_i(\omega_i=l,t_{-i}) f_{-i}(\boldsymbol{\omega}_{-i})$ is the expected payment buyer $i$ has to pay when he transmits his binary value estimate $l$ conditioned on all other buyers' binary value estimates. 
\end{lem}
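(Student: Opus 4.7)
The plan is to rewrite the IC inequality \eqref{eq:opt_b} directly in terms of the conditional quantities $Q_i^l$ and $P_i^l$, and then exploit the fact that buyer $i$ has only one possible deviation, namely flipping his reported bit. Substituting the definitions into \eqref{ui_t} and \eqref{ui_f} collapses the two utilities to $\mathcal{U}_i = v_i Q_i^{\omega_i} - P_i^{\omega_i}$ and $\tilde{\mathcal{U}}_i = v_i Q_i^{\tilde{\omega}_i} - P_i^{\tilde{\omega}_i}$, so the IC requirement becomes a one-dimensional linear inequality in $v_i$, parameterized by the truthful bit that \eqref{eq:binary_t} assigns to $v_i$.

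For necessity I would split along the threshold $\eta_i$. When $v_i \in [a_i,\eta_i]$ (truthful report $0$, deviation $1$), IC gives $P_i^1 - P_i^0 \geq v_i(Q_i^1 - Q_i^0)$; when $v_i \in (\eta_i, b_i]$ (truthful report $1$, deviation $0$), IC gives $P_i^1 - P_i^0 \leq v_i(Q_i^1 - Q_i^0)$. Letting $v_i \to \eta_i$ from the left in the first inequality and from the right in the second pins down the equality $P_i^1 - P_i^0 = \eta_i (Q_i^1 - Q_i^0)$, which is \eqref{ic1}. Re-substituting this equality into the second inequality leaves $(v_i - \eta_i)(Q_i^1 - Q_i^0) \geq 0$ for every $v_i > \eta_i$, which forces $Q_i^1 - Q_i^0 \geq 0$, i.e., \eqref{ic2}.

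For sufficiency I would plug \eqref{ic1} into the deviation gap $\mathcal{U}_i - \tilde{\mathcal{U}}_i$; in both of the two cases the gap simplifies (up to sign) to $|v_i - \eta_i|(Q_i^1 - Q_i^0)$, which is non-negative by \eqref{ic2}. The main conceptual step, and the one most likely to be mishandled, is recognizing that although $v_i$ varies over a continuum, the binary structure of the report combined with the single-threshold quantization \eqref{eq:binary_t} collapses the entire family of IC constraints into one equality and one monotonicity condition, both pinned at $v_i = \eta_i$. This is essentially the discrete two-signal analog of Myerson's envelope characterization, and once it is understood the remaining algebra is routine.
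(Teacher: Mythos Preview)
Your proposal is correct and follows essentially the same route as the paper: rewrite the IC condition as the pair of linear-in-$v_i$ inequalities on $[a_i,\eta_i]$ and $(\eta_i,b_i]$, then squeeze at $v_i=\eta_i$ to obtain \eqref{ic1} and deduce \eqref{ic2}. The paper organizes the squeeze via a case split on the sign of $Q_i^1 - Q_i^0$ rather than your limit argument, and it omits the explicit sufficiency check that you (correctly) include; otherwise the arguments coincide.
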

\begin{proof}
See Appendix \ref{lem_proof}.
\end{proof}
Note that, the IC conditions in \eqref{ic2} and \eqref{ic1} imply the following: 
\begin{inparaenum}
\item for buyer $i$, the winning probability for transmitting 1 is no less than that for transmitting 0,
\item if $i$ wins, the expected amount he has to pay by transmitting 1 is larger than that of transmitting 0. 
\end{inparaenum} Thus, the IC condition can be understood in the following way: if buyer $i$'s actual value estimate is supposed to be quantized to 0 according to his quantization threshold, then he does not have an incentive to transmit 1 and pay more for the object; if buyer $i$ is supposed to quantize his actual value estimate to 1, he may have an incentive to transmit 0 and pay less (higher utility), however, transmitting 0 instead of 1 will decrease his probability to win the auction.

Based on Lemma 1, we can simplify the auction design problem in \eqref{eq_opt}, when the quantization thresholds are given, as follows. 
\begin{thm}\label{theorem}
The optimal mechanism design problem of \eqref{eq_opt}, when the quantization thresholds are given, is equivalent to
\begin{subequations}\label{eq_opt_2}
\begin{align}
& \underset{\mathbf{q}}{\text{maximize}} & &  \sum_{\boldsymbol{\omega}\in \Omega}\left[ \sum\limits_{i=1}^{N}u_{i}(\omega_i)q_i(\boldsymbol{\omega})\right] f(\boldsymbol{\omega})\label{eq_opt_2_a}\\ 
& \text{subject to} & & \sum_{i=1}^N q_i(\boldsymbol{\omega}) \leq 1, \quad \forall \boldsymbol{\omega}\in \Omega \tag{\ref{eq:opt_d}}\\
& & & 0 \leq q_{i}(\boldsymbol{\omega}) \leq 1, \quad i \in \left\{1, \ldots N \right\}\quad \forall \boldsymbol{\omega} \in \Omega \tag{\ref{eq:opt_e}}
\end{align}
\end{subequations}
where 
\begin{equation}\label{eq_u}
u_{i}(\omega_i) = \left\{ 
  \begin{array}{l l}
     \dfrac{a_i - (1-\lambda_i)\eta_i}{\lambda_i} - v_{0}  , & \text{ $\omega_i=0$}\\
    \eta_i - v_{0}, & \text{$\omega_i=1$}
  \end{array} \right.
\end{equation}
with $\lambda_i = f_i(\omega_i=0)$, and the payment to buyer $i$ is given by
\begin{equation}\label{eq:pi}
p_i(\omega_i,\boldsymbol{\omega}_{-i}) = \eta_i q_i(\omega_i,\boldsymbol{\omega}_{-i}) - (\eta_i - a_i)q_i(\omega_i=0,\boldsymbol{\omega}_{-i})
\end{equation}
\end{thm}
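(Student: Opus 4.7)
The plan is to eliminate the payment function $\mathbf{p}$ from the optimization by combining Lemma~\ref{lem} with the IR constraint, thereby reducing \eqref{eq_opt} to an optimization over $\mathbf{q}$ alone. First I would note that, for any buyer $i$ whose true value $v_i$ is such that $\omega_i=0$, the IR inequality $v_i Q_i^0 - P_i^0 \geq 0$ must hold for every $v_i\in[a_i,\eta_i]$; the tightest requirement is at $v_i = a_i$. Since the seller wishes to maximize revenue and $P_i^0$ enters $\mathcal{U}_0$ positively, I would push this IR constraint to equality, obtaining $P_i^0 = a_i Q_i^0$. Combining with the equality from Lemma~\ref{lem} then yields $P_i^1 = a_i Q_i^0 + \eta_i(Q_i^1 - Q_i^0)$.

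Next I would substitute these expressions into the seller's expected utility \eqref{u_0}. Writing $\mathbb{E}[q_i] = \lambda_i Q_i^0 + (1-\lambda_i)Q_i^1$ and $\mathbb{E}[p_i] = \lambda_i P_i^0 + (1-\lambda_i)P_i^1$, routine algebra gives
\begin{equation*}
-v_0\mathbb{E}[q_i] + \mathbb{E}[p_i] = \lambda_i u_i(0) Q_i^0 + (1-\lambda_i) u_i(1) Q_i^1,
\end{equation*}
with $u_i(\omega_i)$ as in \eqref{eq_u}. Since $\sum_{\boldsymbol{\omega}} q_i(\boldsymbol{\omega}) f(\boldsymbol{\omega}) = \lambda_i Q_i^0 + (1-\lambda_i) Q_i^1$, this rearranges to $\mathcal{U}_0 = v_0 + \sum_{\boldsymbol{\omega}} [\sum_i u_i(\omega_i) q_i(\boldsymbol{\omega})] f(\boldsymbol{\omega})$, so dropping the constant reproduces \eqref{eq_opt_2_a}. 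For the closed-form payment \eqref{eq:pi}, I would verify by taking expectation over $\boldsymbol{\omega}_{-i}$ that it indeed implements $P_i^0 = a_i Q_i^0$ and $P_i^1 = \eta_i Q_i^1 - (\eta_i - a_i)Q_i^0$.

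What remains is to check that the omitted constraints are not binding. The relaxed problem \eqref{eq_opt_2} is separable across $\boldsymbol{\omega}$ and its pointwise solution awards the object to the buyer with the largest positive $u_i(\omega_i)$. Because $u_i(1) - u_i(0) = (\eta_i - a_i)/\lambda_i \geq 0$ by \eqref{eq:opt_f}, fixing $\boldsymbol{\omega}_{-i}$ and flipping $\omega_i$ from $0$ to $1$ weakly increases buyer $i$'s score, so $q_i(1,\boldsymbol{\omega}_{-i}) \geq q_i(0,\boldsymbol{\omega}_{-i})$ pointwise and hence $Q_i^1 \geq Q_i^0$; thus the monotonicity condition \eqref{ic2} of Lemma~\ref{lem} is automatically satisfied. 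The IR constraint for $\omega_i=1$ also holds since $v_i Q_i^1 - P_i^1 = (v_i - \eta_i)Q_i^1 + (\eta_i - a_i)Q_i^0 \geq 0$ when $v_i > \eta_i$.

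The substitution itself is mechanical; the delicate step is the last one, i.e., justifying that the relaxed LP \eqref{eq_opt_2} never returns an allocation violating $Q_i^1 \geq Q_i^0$. The sign of $u_i(1)-u_i(0)$ under the threshold constraint \eqref{eq:opt_f} is what makes this work, and that is where I expect the main argument to live.
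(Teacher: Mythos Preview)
Your proposal is correct and follows essentially the same route as the paper: bind the IR constraint at $v_i=a_i$ to get $P_i^0=a_iQ_i^0$, use \eqref{ic1} from Lemma~\ref{lem} to eliminate $P_i^1$, substitute into $\mathcal{U}_0$ to obtain \eqref{eq_opt_2_a}, and then verify \eqref{ic2} a posteriori via $u_i(1)-u_i(0)=(\eta_i-a_i)/\lambda_i\ge 0$. Your explicit check of IR for $\omega_i=1$ is a small addition the paper leaves implicit.
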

\begin{proof}
See Appendix \ref{Thm_proof}.
\end{proof}
Based on Theorem \ref{theorem}, when the quantization thresholds are given, the optimal auction mechanism can be described as follows:
\begin{itemize}
\item For any set of realizations of the binary value estimates $\boldsymbol{\omega}$, the seller compares the corresponding $u_i$s (defined based on \eqref{eq_u}), and sells the object to the buyer with the highest $u_i$. In other words, if $u_i(\omega_i)$ is the highest among all the buyers, then the solution of the winning probability $\mathbf{q}$ is: $q_i=1$, $q_j=0$ for $\forall j\in \{1,\cdots,i-1, i+1,\cdots, N\}$.
\item Only the buyer that wins the auction needs to pay the seller for the object, and the payment is,
\begin{equation}
\begin{aligned}
p_i &= a_i \quad \text{if buyer } i \text{ wins by transmitting 0}\\
p_i &= \eta_i - (\eta_i - a_i)q_i(\omega_i=0,\boldsymbol{\omega}_{-i})\\
& \qquad \qquad \text{if buyer } i \text{ wins by transmitting 1}
\end{aligned}
\end{equation}
Note that, if buyer $i$ wins the auction by transmitting 1, the seller needs to further determine $q_i(\omega_i=0,\boldsymbol{\omega}_{-i})$ to compute the payment, i.e., determine if buyer $i$ would have still won the auction had his binary bid been 0 for the same set of binary bids of the other bidders. 
\item If there is a tie, i.e., multiple bidders have the highest $u_i$, the seller can arbitrarily choose a winner among them without affecting his own utility.
\end{itemize} 




\section{Optimal Quantization Thresholds}
\label{sec_OptEta}

In Section \ref{sec:auction}, we have designed an optimal mechanism when the quantization thresholds $\boldsymbol{\eta} = [\eta_1,\ldots,\eta_N]^T$ are given. From Theorem \ref{theorem}, we observe that the thresholds influence the outcome of the auction mechanism. In this section, we investigate the design of the optimal quantization thresholds by assuming that the value estimates of the buyers are uniformly distributed. We also study the impact of the quantization thresholds by applying our auction model for spectrum auctions in Section V-A.
 

We first study the case when there is only one buyer who is interested in the object. The value estimate of the buyer is assumed to be in $[a,b]$.
With only one buyer, the objective function \eqref{eq_opt_2_a} in the optimization problem \eqref{eq_opt_2} is (note that the indices have been omitted)
\begin{equation}\label{eq:obj_onebuyer}
\begin{aligned}
&u(\omega=0) q(0) \lambda + u(\omega=1) q(1) (1-\lambda)\\
&= \dfrac{(\eta-a)(\eta-b-v_0+a)}{b-a} q(0) + \dfrac{(b-\eta)(\eta - v_0)}{b-a} q(1)\\
\end{aligned} 
\end{equation}
The seller maximizes \eqref{eq:obj_onebuyer} over $\eta$ and $(q(0), q(1))$ subject to the constraints that $a \leq \eta \leq b$, $0 \leq q(0) \leq 1$, $0 \leq q(1) \leq 1$. The following lemma gives the optimal quantization threshold $\eta_{opt}$ when only one buyer is interested in the object with different parameter settings.
\begin{lem}\label{lemOneBuyer}
The seller keeps the object instead of selling it to the buyer if $v_0>b$. When $v_0 \leq b$, the seller designs the optimal quantization threshold $\eta$: if $(b+v_0)/2 > a$, the optimal value for $\eta$ is $(b+v_0)/2$, otherwise, $\eta$ can be any value in $[a,b]$. The details are shown in Table \ref{tab_solution_2}.
\end{lem}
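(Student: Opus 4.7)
The plan is to exploit the fact that, for a single buyer, the probability constraint \eqref{eq:opt_d} collapses to independent box constraints $q(0), q(1) \in [0,1]$, so the inner maximization of \eqref{eq:obj_onebuyer} over $(q(0),q(1))$ for fixed $\eta$ just sets each $q(\ell)$ to $1$ or $0$ according to the sign of its coefficient. The outer problem then reduces to a one-dimensional optimization of
\[
G(\eta) \;=\; \max\bigl(C_0(\eta),0\bigr) + \max\bigl(C_1(\eta),0\bigr),
\]
where $C_0(\eta) = (\eta-a)(\eta-b-v_0+a)/(b-a)$ and $C_1(\eta) = (b-\eta)(\eta-v_0)/(b-a)$ are the coefficients in \eqref{eq:obj_onebuyer}. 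The pivotal algebraic fact I would establish up front, by direct expansion of numerators, is the identity $C_0(\eta)+C_1(\eta) = a - v_0$, which is \emph{independent of }$\eta$; this is what ultimately forces the flat optimum in the degenerate sub-case.

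First I would dispose of the case $v_0>b$: for every $\eta\in[a,b]$, $\eta-v_0<0$ makes $C_1\le 0$, and $\eta-b-v_0+a\le a-v_0<0$ makes $C_0\le 0$, so $G\equiv 0$, meaning the seller optimally sets $q(0)=q(1)=0$ and keeps the object. For $v_0\le b$, I split on whether $(b+v_0)/2 \le a$ or $(b+v_0)/2 > a$. In the first sub-case, equivalently $v_0 \le 2a-b$, the non-trivial root $b+v_0-a$ of $C_0$ lies in $(-\infty,a]$ and the root $v_0$ of $C_1$ lies below $a$, so an elementary sign check shows $C_0(\eta)\ge 0$ and $C_1(\eta)\ge 0$ on the entire interval $[a,b]$. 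By the identity, $G(\eta)=a-v_0$ is then constant, so every $\eta\in[a,b]$ is optimal with $q(0)=q(1)=1$, exactly as claimed.

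In the remaining sub-case $(b+v_0)/2 > a$, i.e.\ $v_0 > 2a-b$, I would note that $C_1(\eta)$ is a downward concave parabola with apex $\eta^\ast = (b+v_0)/2 \in (a,b]$ and peak value $C_1(\eta^\ast)=(b-v_0)^2/(4(b-a))$, and that at $\eta^\ast$ one has $\eta^\ast-a>0$ while $\eta^\ast-b-v_0+a = a-(b+v_0)/2 \le 0$, so $C_0(\eta^\ast)\le 0$. Thus one candidate value of $G$ is $(b-v_0)^2/(4(b-a))$. The only competing regime is the ``both-coefficients non-negative'' configuration (possible only when $v_0\le a$), where the identity of Step~1 gives $G=a-v_0$. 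The required comparison
\[
\frac{(b-v_0)^2}{4(b-a)} \;\ge\; a-v_0
\]
simplifies to $(b-v_0-2(b-a))^2 \ge 0$, which is automatic (with equality only at the boundary $v_0=2a-b$). Hence $\eta^\ast=(b+v_0)/2$ is optimal, and the corresponding payment is read off from \eqref{eq:pi}.

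The main obstacle is the window $2a-b < v_0 \le a$, where both selection rules are genuinely feasible and one must compare their objective values rather than rule one out by sign alone; the perfect-square reduction above is the single non-obvious algebraic step, and everything else is sign tracking together with the linear identity $C_0+C_1=a-v_0$.
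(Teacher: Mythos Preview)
Your proposal is correct and follows essentially the same route as the paper's proof: reduce the inner maximization to a sign analysis of the two quadratic coefficients (the paper's $M,N$), then compare the candidate optima via the perfect-square inequality $(b-v_0)^2 - 4(b-a)(a-v_0) = (2a-b-v_0)^2 \ge 0$. The paper organizes the sign analysis as an exhaustive case table (Table~\ref{tab_solution}) rather than through your identity $C_0+C_1 \equiv a-v_0$, but the substance---and in particular the single nontrivial algebraic comparison---is identical.
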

\begin{proof}
See Appendix \ref{OneBuyer_proof}.
\end{proof}
\begin{table}[tb]\caption{Optimal thresholds with one buyer}
\label{tab_solution_2}
\centering
\begin{tabular}{ | c | c | c | c |}
    \hline
    $v_0, a, b$ & $\eta_{opt}$ & $\mathcal{U}_0^{opt}$ \\ \hline
    $(b+v_0)/2 > a$  & $(b+v_0)/2$ & $(b-v_0)^2/(4(b-a))$  \\  \hline
    $(b+v_0)/2 \leq a$ & $\forall$ & $a-v_0$\\ \hline
\end{tabular}
\end{table}



\begin{figure}[tb]
\begin{center}
\includegraphics[
  width=0.7\columnwidth, height = !]{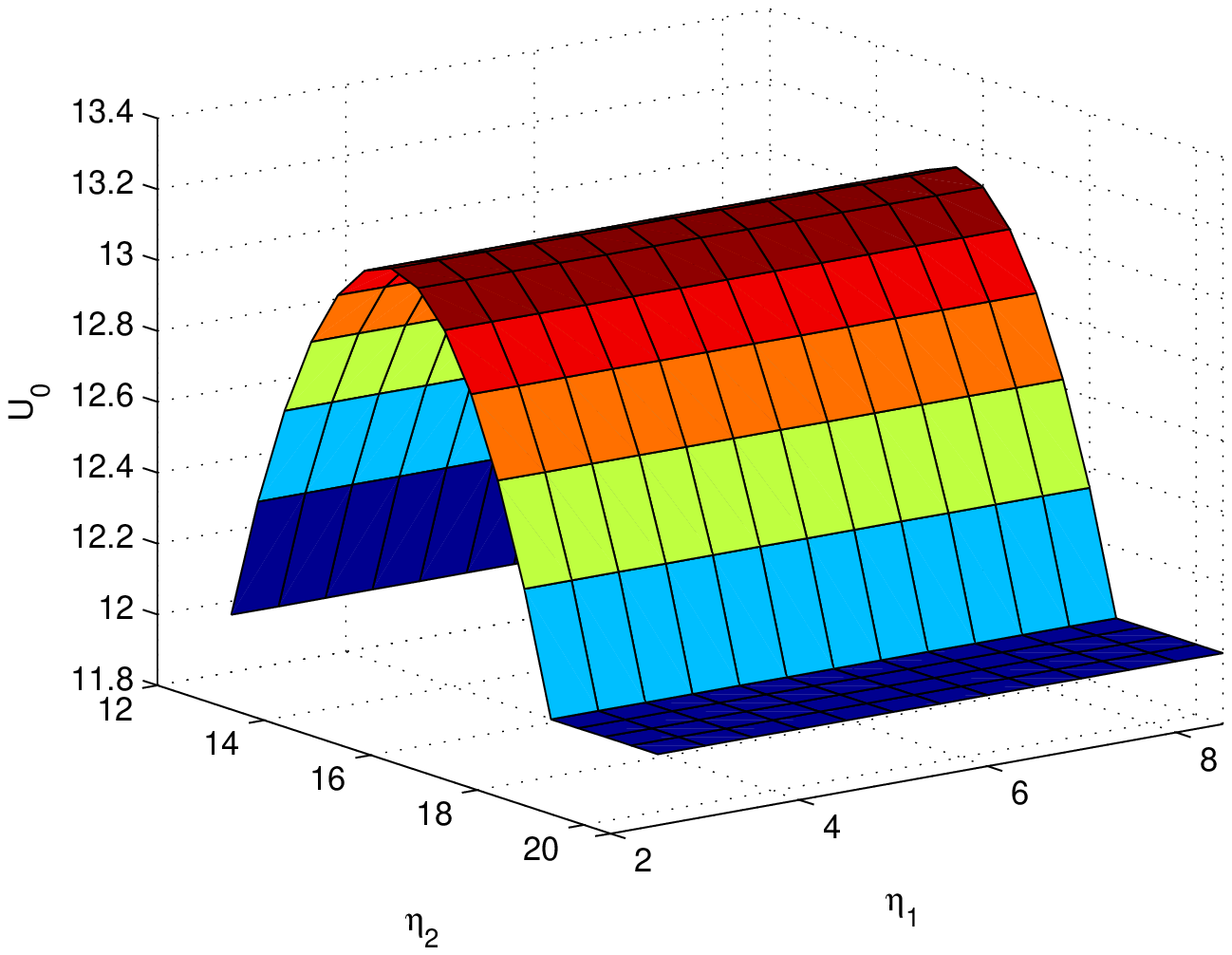}
\caption{Utility of the seller as function of the nonidentical thresholds, $N=2$.}
\label{fig_nonidentical2}
\end{center}
\end{figure}

\begin{figure}[tb]
\begin{center}
\includegraphics[
  width=0.7\columnwidth, height = !]{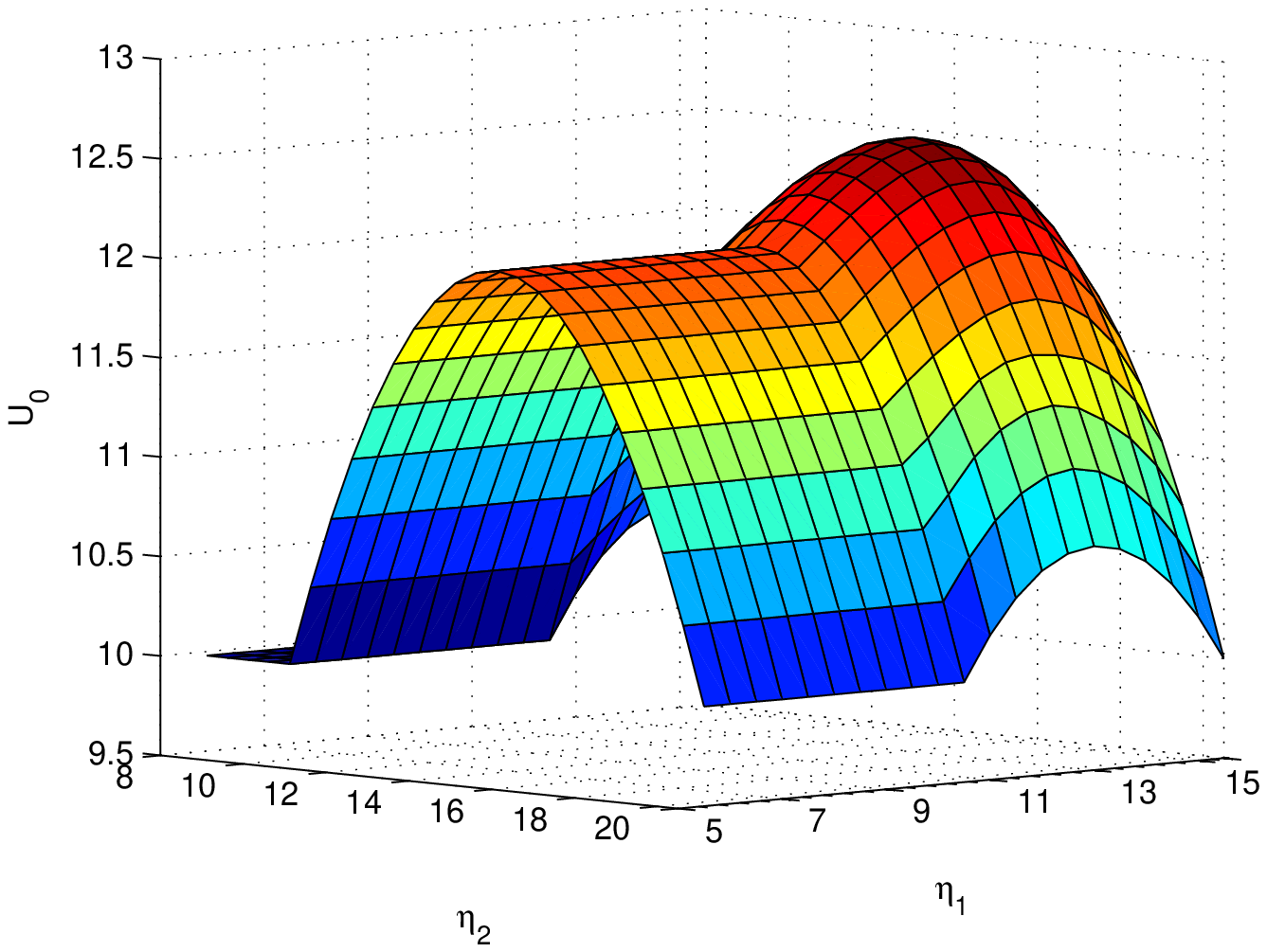}
\caption{Utility of the seller as function of the nonidentical thresholds, $N=2$.}
\label{fig_nonidentical}
\end{center}
\end{figure}

Next we consider the scenario when there are multiple bidders. The seller optimizes the expected value in \eqref{eq_opt_2_a}, where the problem of finding the optimal $\mathbf{q}$ when $\boldsymbol{\eta}$ is given in \eqref{eq_opt_2} is a linear optimization problem, which can be solved with the MATLAB function ``linprog''. To illustrate the influence of the quantization thresholds on the auction mechanism, we next provide some numerical examples. First, we assume that there are $N=2$ buyers, $v_0=10$, $a_1=2, b_1=8$, and $a_2=12, b_2=20$. The expected utility as a function of $\boldsymbol{\eta}=[\eta_1,\eta_2]^T$ is shown in Fig. \ref{fig_nonidentical2}.  Since $v_0 > b_1$ and $b_1 < a_2$, the seller would always select buyer 2 as the winning bidder. In this case, it is irrelevant for the seller to consider buyer 1's actual (binary) value estimate, and thus any $\eta_1 \in [a_1,b_1]$ is equally good for the seller. Therefore, as can be seen from Fig. \ref{fig_nonidentical2}, the expected utility of the seller changes only with buyer 2's quantization threshold, and is invariant of buyer 1's quantization threshold. From Fig. \ref{fig_nonidentical2}, it can also be seen that the optimal threshold for buyer 2 is $\eta_2=15$.

In Fig. \ref{fig_nonidentical}, we study the scenario where $v_0=10$, $a_1=5, b_1=15$, and $a_2=8, b_2=20$. The expected utility of the seller is a function of both buyer's  quantization thresholds, since the interval of the two buyers' value estimates are overlapped. As can be seen from the figure, the seller sets the quantization thresholds to be $\eta_1=13, \eta_2=15$ to obtain the optimal expected utility.

\subsection{Spectrum Auctions: Impact of Bid Quantization}
We now study the impact of bid quantization by applying our auction model to spectrum auction~\cite{zhang2013auction}
in cognitive radio systems, where the traded object is spectrum.
In the spectrum auction problem, a primary user (PU) when not using a certain portion 
of his assigned spectrum, solicits bids from a set of secondary users (SU) to 
sell the 
unused spectrum (refer to \cite{zhang2013auction} for an overview of spectrum auctions). 
As was mentioned earlier, past literature~\cite{zhang2013auction} on the design of spectrum auctions 
only considers communication of 
analog bids from SUs and has not considered the 
design of optimal
auctions when SUs transmit quantized bids.
To gain insights into the dynamics of auction mechanism with quantized bids in the context of spectrum auctions,
we study the expected utility of the PU (i.e., the seller) when the SUs (i.e., the bidders)
send analog bids as well as when
the SUs transmit quantized bids.

Fig. \ref{fig:utility} presents the expected utility of the PU under both scenarios. 
The bid of each bidder is assumed to be uniformly distributed over $[5,20]$. We base our results, for the case when SUs send analog bids (`Analog' in Fig. \ref{fig:utility}), on the spectrum auction proposed in~\cite{optimal_auction_myerson}. For the case when SUs send quantized bids, we consider optimal quantization thresholds (`Binary Optimal' in Fig. \ref{fig:utility}) as well as quantization thresholds chosen randomly in an uniform manner (`Binary Random' in Fig. \ref{fig:utility}).   
We observe that for all cases, the expected utility of the PU increases as 
the number of bidders increases. This is because with increasing number of  SUs, the chances of the PU finding a bidder with higher value estimate increase. As expected, the expected utility of the PU with quantized bids is lower than that with analog bids. Also, the use of optimal quantization thresholds result in a higher utility at the seller than when thresholds are chosen randomly.

\begin{figure}[tb]
\begin{center}
\includegraphics[
  width=0.7\columnwidth, height = !]{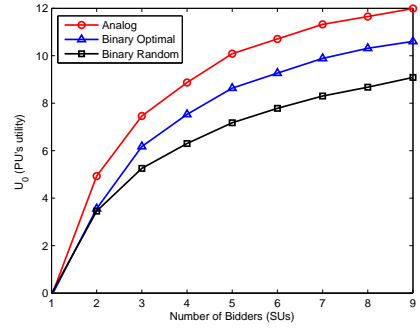}
\caption{Utility of the seller (PU) with number of bidders (SUs).}
\label{fig:utility}
\end{center}
\end{figure}

\section{Conclusion}
In this letter, we designed an optimal auction mechanism for an environment where bidders quantize their value estimates regarding the traded object into binary values prior to communicating them to the seller. The mechanism is designed to maximize the seller's expected utility while ensuring the individual rationality (IR) and incentive-compatibility (IC) constraints. The letter also investigated the design of the optimal quantization thresholds, using which buyers would quantize their private value estimates, such that the seller's expected utility is maximized.

\bibliography{journal}
\bibliographystyle{IEEEtran}

\clearpage

\appendices
\section{Proof of Lemma \ref{lem}}
\label{lem_proof}
Recall \eqref{eq:opt_b} and \eqref{eq:pi}, we get the equivalent IC conditions as
\begin{subequations}
\begin{align}
&v_i Q_i^0 - P_i^0 \geq v_i Q_i^1 - P_i^1, \quad \forall v_i \in [a_i, \eta_i] \label{icp_1}\\
&w_i Q_i^1 - P_i^1 \geq w_i Q_i^0 - P_i^0, \quad \forall w_i \in (\eta_i, b_i]\label{icp_2}
\end{align}
\end{subequations}
Since \eqref{icp_1} and \eqref{icp_2} can be directly written as
\begin{subequations}
\begin{align}
&v_i (Q_i^1 - Q_i^0) \leq  P_i^1 - P_i^0, \quad \forall v_i \in [a_i, \eta_i] \label{icp_3} \\
&w_i (Q_i^1 - Q_i^0) \geq  P_i^1 - P_i^0, \quad \forall w_i \in (\eta_i, b_i] \label{icp_4}
\end{align}
\end{subequations}
If $Q_i^0 = Q_i^1$, \eqref{icp_3} and \eqref{icp_4} imply the condition that $P_i^0 = P_i^1$. If $Q_i^1 < Q_i^0$, \eqref{icp_3} and \eqref{icp_4} are equivalent to
\begin{align*}
&v_i \geq \dfrac{P_i^1 - P_i^0}{Q_i^1 - Q_i^0}, \quad \forall v_i \in (a_i, \eta_i] \\
&w_i \leq \dfrac{P_i^1 - P_i^0}{Q_i^1 - Q_i^0}, \quad \forall w_i \in (\eta_i, b_i)
\end{align*}
Then the condition
$b_i \leq \dfrac{P_i^1 - P_i^0}{Q_i^1 - Q_i^0} \leq a_i$
must be satisfied, which is contradictory to our definition of buyers' value estimates. 
With $Q_i^1 > Q_i^0$, we have
\begin{equation}\label{icp_5}
\begin{aligned}
&v_i \leq \dfrac{P_i^1 - P_i^0}{Q_i^1 - Q_i^0},\quad \forall v_i \in [a_i, \eta_i] \\
&w_i \geq \dfrac{P_i^1 - P_i^0}{Q_i^1 - Q_i^0},\quad \forall w_i \in (\eta_i, b_i].
\end{aligned}
\end{equation}
So that $Q_i^1\geq Q_i^0$, and
$\eta_i = \dfrac{P_i^1 - P_i^0}{Q_i^1 - Q_i^0}$.
Thus the lemma is proved.

\section{Proof of Theorem \ref{theorem}}
\label{Thm_proof}
The IR constraint of \eqref{eq:opt_a} can be considered for the two cases as: 
\begin{subequations}
\begin{align}
&\mathcal{U}_i (p_i, q_i, v_i, \omega_i = 0) \geq 0,\quad \forall v_i \in [a_i, \eta_i] \label{ui_0}\\
&\mathcal{U}_i (p_i, q_i, v_i, \omega_i = 1)  \geq 0,\quad \forall v_i \in [\eta_i, b_i]. \label{ui_1}
\end{align}
\end{subequations}
We may write the seller's objective function of \eqref{eq_opt} as
\begin{equation}\label{thmp1}
\begin{aligned}
&\mathcal{U}_0 (\mathbf{p}, \mathbf{q}) \\
&= v_0 - \sum\limits_{i=1}^{N}  v_0 \left(\sum_{\boldsymbol{\omega}\in \Omega}q_i(\boldsymbol{\omega})f(\boldsymbol{\omega}) \right)+ \sum\limits_{i=1}^{N} \sum_{\boldsymbol{\omega}\in \Omega} p_i(\boldsymbol{\omega}) f(\boldsymbol{\omega}) \\
&= v_0 -\sum\limits_{i=1}^{N} v_0 \Big[\lambda_i Q_i^0  + (1-\lambda_i) Q_i^1 \Big]\\
&\qquad + \sum\limits_{i=1}^{N} \left[\lambda_i P_i^0 + (1-\lambda_i)P_i^1 \right]
\end{aligned}
\end{equation}
By \eqref{ic1} in Lemma 1, we know that
\begin{equation}\label{thmp2}
\begin{aligned}
&\lambda_i P_i^0 + (1-\lambda_i)P_i^1 \\
=& \lambda_i P_i^0 + (1-\lambda_i)\Big[P_i^0 + \eta_i (Q_i^1 - Q_i^0)\Big]\\
=& P_i^0 + (1-\lambda_i) \eta_i (Q_i^1 - Q_i^0)
\end{aligned}
\end{equation}
The expected payment of buyer $i$ for $\forall v_i \in [a_i,\eta_i]$ is
\begin{equation*}
P_i^0 = -\mathcal{U}_i (p_i, q_i, v_i, \omega_i = 0) + v_i Q_i^0,
\end{equation*}
with $v_i=a_i$, 
\begin{equation}\label{pi1}
P_i^0=-\mathcal{U}_i (p_i, q_i, v_i=a_i, \omega_i = 0) + a_i Q_i^0
\end{equation}
Substituting \eqref{thmp2} and \eqref{pi1} into \eqref{thmp1} gives us:
\begin{equation}\label{thmp3}
\begin{aligned}
&\mathcal{U}_0 (\mathbf{p}, \mathbf{q}) \\
=& - \sum\limits_{i=1}^{N} v_0 \Big[\lambda_i Q_i^0 + (1-\lambda_i) Q_i^1\Big]\\
& + \sum\limits_{i=1}^{N} \Big[ (1-\lambda_i) \eta_i (Q_i^1 - Q_i^0) + a_i Q_i^0 \Big] \\
& + v_0 - \sum\limits_{i=1}^{N}\mathcal{U}_i (p_i, q_i, v_i=a_i, \omega_i = 0) \\
=&\sum\limits_{i=1}^{N} \Bigg\{\lambda_i \Big[\dfrac{a_i-(1-\lambda_i)\eta_i}{\lambda_i}-v_0\Big] Q_i^0 + (1-\lambda_i) (\eta_i-v_0)Q_i^1\Bigg\}\\
& + v_0 - \sum\limits_{i=1}^{N}\mathcal{U}_i (p_i, q_i, v_i=a_i, \omega_i = 0)
\end{aligned}
\end{equation}
In \eqref{thmp3}, the payment vector only appears in the last term of the utility of the seller. Also, by the IR constraint \eqref{ui_0}, we know that
\begin{equation}\label{thm_ir}
\mathcal{U}_i (p_i, q_i, v_i=a_i, \omega_i = 0) \geq 0, \quad i \in \{1,\ldots,N\}
\end{equation}
Therefore, to maximize \eqref{thmp3} subject to the constraints, the winning buyer must make payment to the seller according to:
\begin{equation} \label{uie0}
\mathcal{U}_i (p_i, q_i, v_i=a_i, \omega_i = 0) = 0
\end{equation}
which, combined with \eqref{pi1} and \eqref{ic1}, gives the following payment functions 
\begin{equation}\label{pay_2}
\begin{aligned}
&P_i^0 = a_i Q_i^0\\
&P_i^1 = \eta_i Q_i^1 - (\eta_i - a_i)Q_i^0
\end{aligned}
\end{equation}
From \eqref{pay_2}, we get the payment of buyer $i$ regarding his binary value estimate $\omega_i$ in \eqref{eq:pi}.
Further, substituting the payment functions \eqref{pay_2} into the objective function \eqref{thmp1}, we get \eqref{eq_opt_2_a} and \eqref{eq_u}.

To further check the condition in \eqref{ic2}, we observe that
\begin{equation}
\eta_i \geq \dfrac{a_i-(1-\lambda_i) \eta_i}{\lambda_i}
\end{equation}
So that $u_i(\omega_i=1) \geq u_i(\omega_i=0)$, which means that whenever buyer $i$ could win the auction by transmitting a binary value estimate $0$, he could also win if he changed it to $1$. That is, given other buyers' binary value estimates, the expected probability that buyer $i$ would win when he transmits his value estimate to be 1 is higher than that when he transmits 0, i.e., \eqref{ic2} is satisfied. Moreover, \eqref{ic1} is considered in \eqref{thmp2}, and the IR condition is satisfied as shown in \eqref{uie0}. Therefore, the optimization problem considered in \eqref{eq_opt} is equivalent to maximizing the objective function \eqref{thmp3} subject to the buyer selection probability constraints \eqref{eq:opt_d} and \eqref{eq:opt_e}. This proves the theorem.

\begin{table*}[htb]\caption{Optimal quantization thresholds with one buyer.}
\label{tab_solution}
\centering
\begin{tabular}{ | c | c | c | c | c | c |}
    \hline
    $v_0, a, b$ & $\eta$ & $q_{opt}(0)$ & $q_{opt}(1)$ & $\eta_{opt}$ & $\mathcal{U}_0^{opt}$ \\ \hline
    $v_0>b$~($(b+v_0)/2>b$) & $a \leq \eta \leq b$ & 0 & 0 & $\forall$ & 0 \\ \hline
    \multirow{4}{*}{$a<v_0\leq\frac{b+v_0}{2}\leq b < b-a+v_0$} & $a<\eta<v_0$ & 0 & 0 & $\forall$ & 0  \\  
    & $v_0<\eta<b$ & 0 & 1 & $(b+v_0)/2$ & $(b-v_0)^2/(4(b-a))$\\
    & $\eta=b$ OR $\eta=v_0$ & 0 & $\forall$ & $\forall$ & 0\\ 
    & $\eta=a$ & $\forall$ & 0 & $\forall$ & 0\\ \hline
    \multirow{2}{*}{$v_0=a$} & $\eta=a$ OR $\eta=b$ & $\forall$ & $\forall$ & $\forall$ & 0\\ 
    & $a<\eta<b$ & 0 & 1 & $(b+v_0)/2$ & $(b-v_0)^2/(4(b-a))$\\ \hline
    \multirow{4}{*}{$v_0<a <\frac{b+v_0}{2}< b-a+v_0 <b$ } & $\eta=a$ OR $\eta=b-a+v_0$ & $\forall$ & 1 & $a$ OR $b-a+v_0$ & $a-v_0$\\ 
    & $a<\eta<b-a+v_0$ &0 & 1 & $(b+v_0)/2$ & $(b-v_0)^2/(4(b-a))$\\ 
    & $b-a+v_0<\eta<b$ & 1 & 1 & $\forall$ & $a-v_0$\\
    & $\eta=b$ &1 & $\forall$ & $b$ & $a-v_0$\\ \hline
    \multirow{3}{*}{$v_0 < b-a+v_0 \leq \frac{b+v_0}{2} \leq a < b$} & $\eta=a$ & $\forall$ & 1 & $a$ &  $a-v_0$\\ 
    & $\eta=b$ & 1 & $\forall$ & $b$ & $a-v_0$\\
    & $a<\eta<b$ & 1 & 1 & $\forall$ & $a-v_0$\\ \hline
\end{tabular}
\end{table*}
\section{Proof of Lemma \ref{lemOneBuyer}}
\label{OneBuyer_proof}
With only one buyer, the objective function \eqref{eq_opt_2_a} in the optimization problem \eqref{eq_opt_2} is (note that the index has been omitted)
\begin{equation}\label{eq:obj_onebuyer_2}
\begin{aligned}
&u(\omega=0) q(0) \lambda + u(\omega=1) q(1) (1-\lambda)\\
& = \lambda\Big[\dfrac{a - (1-\lambda)\eta}{\lambda} - v_{0}\Big] q(0) + (1-\lambda)(\eta - v_0) q(1)\\
&= \dfrac{\eta^2 -(b+v_0)\eta + a v_0 + a(b-a)}{b-a} q(0) \\
&\qquad + \dfrac{-\eta^2 +(b+v_0)\eta - b v_0}{b-a} q(1)\\
&= \dfrac{(\eta-a)(\eta-b+a-v_0)}{b-a} q(0) + \dfrac{(b-\eta)(\eta - v_0)}{b-a} q(1)\\
&\triangleq M q(0) + N q(1)\\
\end{aligned} 
\end{equation}
where $M \triangleq [(\eta-a)(\eta-b+a-v_0)]/(b-a)$ and $N \triangleq [(b-\eta)(\eta - v_0)]/(b-a)$, 
and $a \leq \eta \leq b$, $0 \leq q(0) \leq 1$, $0 \leq q(1) \leq 1$, 

We observe that the optimal solutions for $q_{opt}(0)$, $q_{opt}(1)$, and $\eta_{opt}$ depend on the relationship among the parameters of the system. Thus, we list all the conditions and the corresponding solutions in Table \ref{tab_solution}\footnote{Notation ``$\forall$'' for $q_{opt}$ represents that $q_{opt}$ can be either 1 or 0, and notation ``$\forall$'' for $\eta_{opt}$ represents that $\eta_{opt}$ can be any value between $[a,b]$.}. From the table, it can be observed that when $v_0 > b$ (row 1 of Table \ref{tab_solution}), the seller does not sell the object, so that design of the quantization thresholds is irrelevant. Otherwise, if $(b+v_0)/2 > a$ (rows 2, 3 and 4), then, since it can be shown that $(b-v_o)^2/4(b-a) \geq a-v_0$ for any real values of $a$, $b$, and $v_0$, the optimal quantization threshold can be set as $(b+v_0)/2$. However, if $(b+v_0)/2 \leq a$ (row 5), any value in [a,b] is equally good as a quantization threshold. This proves the lemma.

%

\end{document}